\newtheorem{theorem}{Theorem}
\newtheorem{definition}{Definition}
\newtheorem{example}{Example}
\newtheorem{lemma}{Lemma}
\newtheorem{assumption}{Assumption}
\newcommand{\join}{\vee}
\newcommand{\meet}{\wedge}
\newcommand{\bigjoin}{\bigvee}
\newcommand{\bigmeet}{\bigwedge}
\newcommand{\lattice}[1]{\mathbf{#1}}
\newcommand{\graph}[1]{#1}
\newcommand{\kripke}[1]{\mathcal{#1}}
\newcommand{\define}[1]{\textit{#1}}
\renewcommand{\vec}[1]{\mathbf{#1}}
\newcommand{\relation}[1]{\mathcal{#1}}
\newcommand{\bool}{2}
\renewcommand{\phi}{\varphi}
\newcommand{\sheaf}[1]{\mathcal{#1}}
\renewcommand{\geq}{\geqslant}
\renewcommand{\preceq}{\preccurlyeq}
\renewcommand{\succeq}{\succcurlyeq}
\newcommand{\N}{\{0,1,2,\dots \}}
\newcommand{\R}{\mathbb{R}}
\DeclareMathOperator{\id}{id}
\newcommand{\fc}{%
  \mathrel{\mathpalette\fc@\relax}%
}
\newcommand{\fc@}[2]{%
  \sbox\z@{$#1\lhd$}%
  \sbox\tw@{$#1\leqslant$}%
  \dimen@=\ht\tw@
  \advance\dimen@-\ht\z@
  \ifx#1\displaystyle
    \advance\dimen@ .2pt
  \else
    \ifx#1\textstyle
      \advance\dimen@ .2pt
    \fi
  \fi
  \ooalign{\raisebox{\dimen@}{$\m@th#1\lhd$}\cr$\m@th#1\leqslant$\cr}%
}
\newcommand{\kripkeframe}{\mathbf{F}}
\title{\LARGE \bf Diffusion of Information on Networked Lattices by Gossip}
\author{Hans Riess\thanks{University of Pennsylvania; Department of Electrical/Systems Engineering; \texttt{hmr@seas.upenn.edu}} \and Robert Ghrist \thanks{University of Pennsylvania; Departments of Mathematics and Electrical/Systems Engineering; \texttt{ghrist@math.upenn.edu}}\thanks{\scriptsize This material is based upon work supported by the Under Secretary of Defense for Research and Engineering (Research Technology \& Laboratory Directorate/Basic Research Office) under Grant No. HQ00342110001. The views expressed do not necessarily reflect the official policies of the Department of Defense nor does mention of trade names, commercial practices, or organizations imply endorsement by the U.S. Government.}}
\begin{document}

\maketitle

\begin{abstract}
We study time-dependent dynamics on a network of order lattices, where structure-preserving lattice maps are used to fuse lattice-valued data over vertices and edges. The principal contribution is a novel asynchronous Laplacian, generalizing the usual graph Laplacian, adapted to a network of heterogeneous lattices. The resulting gossip algorithm is shown to converge asymptotically to stable ``harmonic'' distributions of lattice data. This general theorem is applicable to several general problems, including lattice-valued consensus, Kripke semantics, and threat detection, all using asynchronous local update rules.
\end{abstract}


\section{Introduction}
\label{sec:intro}
The use of the graph Laplacian to diffuse information over networks is well-established in classical and contemporary work ranging from opinion dynamics \cite{taylor1968towards} to distributed multi-agent consensus \cite{degroot1974} and control \cite{preciado2010distributed}, synchronization \cite{preciado2015sync,sepulchre2005collective}, flocking \cite{tanner2007flocking}, and much more. In the past decade, Laplacians that are adapted to handle vector-valued data, such as graph connection Laplacians \cite{singer_vector_2012,BandeiraSS13} or matrix-weighted Laplacians \cite{tuna16synchronization}, have been revolutionary in signal processing processing \cite{ortega2018graph,shuman2013emerging} and machine learning \cite{welling2016semi,ruiz2021graph}.

While the ultimate form of a generalized Laplacian is as yet not present in applications, there are hints of a broader theory finding its way from algebraic topology to data science. The Laplacian from calculus class and the graph Laplacian are two extreme examples of a \define{Hodge Laplacian}. These are operators which act diffusively on data structures called \define{sheaves} \cite{bredon1997sheaf}: see \S\ref{sec:sheaves} for brief details.

The present work is motivated by extending recent work on distributed consensus and data fusion from the setting of vector-valued data to that of data valued in more general partially-ordered sets and, specifically, lattices (in the algebraic as opposed to discrete sense): see \S\ref{sec:lattices}. The fidelity with which lattices (including Boolean algebras) can model logical structures makes them appealing for representing distributed systems with complex logical behaviors, such as preference posets \cite{janis2015poset}, robust linear temporal logic \cite{anevlavis2022being} and discrete signal processing \cite{puschel2021discrete}. Maps between lattices preserving structure and their residuals model the interchange of information between neighbors: see \S\ref{sec:tarski}.

\paragraph*{Related work}
The consensus literature is vast and includes Laplacian-based protocols \cite{olfati2007consensus} and (asynchronous) gossip-based protocols \cite{kempe2003gossip,boyd2006randomized}. Several works consider consensus on general functions \cite{cortes2008distributed}, including max/min consensus \cite{nejad2009max,tahbaz2006one}. In the area of distributed computing, centralized branch-and-bound style algorithms for reaching agreement on lattices have been discovered \cite{zheng2021lattice}. Sheaves have shown promise in multi-agent systems, particularly from the perspective of concurrency \cite{goguen1992sheaf}, routing \cite{ghrist2013topological,moy2020path}, opinion dynamics \cite{hansen2021opinion} and sensor fusion \cite{robinson2017sheaves}.

The Bayesian approach to modeling knowledge/belief propagation via graphical models \cite{koller2009probabilistic} is standard but fundamentally different than the approach here.  Modal logics, particularly temporal logics, have seen numerous applications in model-checking \cite{baier2008principles} and control systems \cite{kantaros2019temporal,rodionova2021stl,riess2021temporal,kantaros2020reactive}. There are several use-cases of multimodal logics in the analysis of message-passing systems \cite{fagin2004reasoning}.

The authors previously defined a synchronous [Tarski] Laplacian and proved a Hodge-style fixed-point convergence result \cite{ghrist2020cellular}, which is extended here to the asynchronous setting. The \define{gossip algorithm} introduced in this paper generalizes an algorithm called the alternating algorithm \cite[\S 3]{cuninghame2003equation} introduced to synchronize event times of a pair of coupled discrete event systems, each described by a max-plus linear system \cite{cuninghame1991minimax}; our algorithm goes far beyond this in computing sections of (nearly) arbitrary lattice-valued network sheaves.


\paragraph*{Outline}
Background material (\S\ref{sec:bg}) and problem specifications (\S\ref{sec:problem}) are followed by details of a novel Laplacian for asynchronous communication on networks of lattices (\S\ref{sec:async}). It is here that the main results on stability and convergence are proved. The subsequent section (\S\ref{sec:semantics}) detail applications to multimodal logics, by using Kripke semantics, leading to a dual pair of {\em semantic} and {\em syntactic} Laplacians for diffusing knowledge and beliefs. This work ends (\S\ref{sec:examples}) with some simulation.
\section{Background}
\label{sec:bg}

\subsection{Lattices}
\label{sec:lattices}

Ordered sets model data types such as relations, concepts, rankings, matchings, concurrent events, as well as other taxonomies of information that are hierarchial, epistemic, or logical in nature: both in \S\ref{sec:semantics}. Lattices double as ordered sets and algebraic structures consisting of two ``merging'' operations called \define{meet} and \define{join}.

\begin{definition}
A \define{lattice} is a tuple $\lattice{Q} = (Q, \meet, \join, 0, 1)$. $Q$ is a set with binary operations, $\meet$ (meet) and $\join$ (join), satisfying the following axioms: 1) $\meet$/$\join$ are commutative and associative; 2) for all $x \in Q$, $x \meet x = x$ and $x \join x = x$; 3) there exist elements $\bot, \top \in Q$ such that $\bot$ and $\top$ are the identity of $\join$ and $\meet$ respectively; 4) for all $x, y \in Q$, $x = x \join (x \meet y) = x \meet (x \join y)$.
\end{definition}

  Equivalently, lattices can be viewed as (partially) ordered sets $(\lattice{Q}, \preceq)$ with $x \preceq y \Leftrightarrow x \meet y = x$ or, equivalently, $x \succeq y \Leftrightarrow x \join y = x$. It will be useful to think of lattices as both partially-ordered sets and algebraic structures: the $\preceq$ notation will be crucial in proofs of all main results.

\begin{example}
    Suppose $S$ is a set. The powerset $2^S$ is a (Boolean) lattice $(2^S, \cap, \cup, \emptyset, S)$. The truth values $\vec{2} = \left(\{0,1\}, \meet, \join, 0, 1\right)$ is a lattice. Other important lattices are embedded in $2^S$ such as lattices representing ontologies \cite{wille1982restructuring}, partitions \cite{davey2002introduction}, rankings, preferences \cite{curello2019preference,janis2015poset}, and information-theoretic content \cite{shannon1953lattice}. The extended real line $\bar{\R} = \R \cup \{- \infty, \infty \}$ is a lattice with min and max. Cartesian products of lattices are lattices with the component-wise meet and join operations; the lattice $\bar{\R}^n$ and its matrix algebra is integral to the study of discrete event systems \cite{cuninghame1991minimax}, $\vec{2}^n$ to logic gates \cite{gilbert1954lattice}.
\end{example}

In order to work with systems of lattices, we will exploit \define{lattice maps} $\varphi:\lattice{P}\to\lattice{Q}$. Such a map is \define{order preserving} if $x\preceq y \Rightarrow \varphi(x)\preceq\varphi(y)$ and is \define{join preserving} if $\varphi(x\vee_{\lattice{P}} y)=\varphi(x)\vee_{\lattice{Q}}\varphi(y)$ and $\varphi(\bot_\lattice{P})=\bot_\lattice{Q}$, omiting subscripts when context dictates. Join preserving maps are automatically order preserving. A dual definition of \define{meet preserving} maps holds with similar consequences. 

\subsection{Network Sheaves}
\label{sec:sheaves}
Suppose $\graph{G} = (\graph{V}, \graph{E})$ is an undirected graph (possibly with loops). An edge between (not necessarily distinct) nodes $i$ and $j$ in $\graph{V}$ is denoted by an unordered concatenated pair of indices $ij=ji \in \graph{E}$. The set $\graph{N}_i = \{j: ij \in \graph{E} \}$ are the \define{neighbors} of $i$.

Given such a fixed network $\graph{G}$, we define a data structure over $\graph{G}$ taking values in lattices. Such a structure is an example of a \define{cellular sheaf} \cite{shepard1985cellular,curry2014sheaves} (though the details of sheaf theory are not needed here).
\begin{definition}
     A  \define{finite lattice-valued network sheaf} $\sheaf{F}$ over $\graph{G}$ is a data structure that assigns:
    \begin{enumerate}
        \item A finite lattice $(\sheaf{F}(i), \meet_i, \join_j)$ to every node $i \in \graph{V}$.
        \item A finite lattice $(\sheaf{F}(ij), \meet_{ij}, \join_{ij})$ to every edge $ij \in \graph{E}$.
        \item Join-preserving maps
        \begin{equation}
            \begin{tikzcd}
                \sheaf{F}(i) \arrow[r,"\sheaf{F}_{i \fc  ij}"] & \sheaf{F}(ij) & \sheaf{F}(j) \arrow[l,"\sheaf{F}_{j \fc  ij}", swap]
            \end{tikzcd}
        \end{equation}
        for every $ij \in \graph{E}$.
    \end{enumerate}
\end{definition}

In most applications we imagine, the maps from node data to edge data will be join-preserving. Thinking of a sheaf as a distributed system, the state of the system is given by an \define{assignment} of vertex data: a tuple $\mathbf{x} \in \prod_{i \in \graph{V}} \sheaf{F}(i)$ of choices of data $x_i\in\sheaf{F}(i)$ for each $i$. The data over the edges and the maps $\sheaf{F}_{i \fc  ij}$ are used to compare the compatibility of vertex data in an assignment. The following definition is crucial.
\begin{definition}
The \define{sections} of $\sheaf{F}$ are assignments $\mathbf{x}$ that are compatible: for every $ij \in \graph{E}$,
\begin{equation}
    \sheaf{F}_{i \fc  ij}(x_i) = \sheaf{F}_{j \fc  ij}(x_j).
\end{equation}
\end{definition}
The set of sections of $\sheaf{F}$ is denoted $\Gamma(\sheaf{F})$. These are the ``globally compatible'' states. In the simplest example of a \define{constant sheaf} (which assigns a fixed lattice to each vertex and edge, with identity maps between vertex and edge data), sections are precisely assignments of an identical element to each vertex (and edge): consensus over the network. 

\begin{example}
The true utility of a sheaf lies in heterogeneity. For example, assign to each vertex $i\in V$ a finite set $S_i$ and to each edge $ij\in E$ a finite set $S_{ij}$ and set-maps $S_i\rightarrow S_{ij} \leftarrow S_j$. This induces several interesting sheaves of lattices. The \define{powerset sheaf} assigns the powersets (lattices of all subsets with union and intersection) to vertices and edges, with induced maps. The \define{partition sheaf} assigns the partition lattices (partitions of set elements with partition union and refinement) to vertex and edge sets, again with induced maps. More examples from formal concept analysis are less well-known \cite{wille1982restructuring}, but very general.     
\end{example}

\section{Problem Formulation}
\label{sec:problem}

An abstract formulation is sufficient, but, for concreteness, consider a scenario in which a collection of geographically dispersed agents collect, process, and communicate data based on local sensing. Proximity gives rise to a communications network, modeled as an undirected graph $\graph{G}$. The data are lattice-valued, but each agent $i\in V$ works within its lattice $\sheaf{F}(i)$, assumed to be finite. In order for two proximate agents $i$ and $j$ to communicate their individualized data (residing in $\sheaf{F}(i)$ and $\sheaf{F}(j)$ respectively), they must ``fuse'' their observiations into a common lattice $\sheaf{F}(ij)$ by means of structure-preserving lattice maps. Together, the system forms a network sheaf.

The problem envisioned is distributed consensus by means of asynchronous communication and updates. By consensus, we do not mean that everyone agrees on a particular fixed lattice value; rather, each node agrees upon choices of {\em local} data that, when translated and compared to all neighbors' data over communication edges, agree. In the context of a sheaf of lattices, this is precisely the condition of an assignment being a {\em section}. The synchronous version of this problem -- everyone communicates simultaneously with neighbors and updates immediately -- is solvable via the Tarski Laplacian as per \cite{ghrist2020cellular}. The asynchronous problem is our focus here. Select sensors broadcast their data to neighboring nodes according to some \define{firing sequence}. 

Denote by $\tau: \N \to 2^\graph{V}$ the firing sequence of selected nodes which broadcast as a function of (ordered, discrete) time. At $t \in \N$, active nodes $i\in \tau_t$ broadcast to each agent $j \in \graph{N}_{i}$.\footnote{If desired, one may choose a selection of egdes incident to the node and broadcast only to those neighbors. This results in more bookkeeping and a refined version of liveness, below, but does not substantially change the results or proofs.} No other nodes broadcast. 

This notion of a firing sequence suffices to cover asynchronous updating. The regularization of time to $\N$ is a convenience and does not impact results. The firing sequence is, in practice, not known {\em a priori}. This is of no consequence since our results will hold independent of the choice of firing sequence. The following assumptions will hold throughout.

\begin{assumption}[Latency-Free]\label{ass:nodelay}
    Within a single time instance $t$, nodes may broadcast (if firing), receive data (always), and compute (always). 
\end{assumption}
\begin{assumption}[Liveness]\label{ass:liveness}
    For all $i \in \graph{V}$ and for every $t \in \N$ there is a $t' \geq t$ such that $i \in \tau_{t'}$. As such, agents neither die nor are removed from the system.
\end{assumption}

\begin{assumption}[Cross-Talk]\label{ass:crosstalk}
    Suppose $i$ is an agent and $j, j' \in \graph{N}_i \cap \tau_t$ are active neighbors. Then, $j$ and $j'$ can simultaneously broadcast to $i$ without resulting in a fault.
\end{assumption}


Under these assumptions, we wish to solve the following distributed asynchronous constrained agreement problem. Assume 1) a network $\graph{G}=(\graph{V},\graph{E})$; 2) a sheaf $\sheaf{F}$ of lattices over $\graph{G}$; 3) an firing sequence $\tau:\N\to 2^{\graph{V}}$ of broadcasts; and 4) an initial condition $\vec{x}[0]$, being an assignment of an element $x_v\in\sheaf{F}(v)$ to each agent $v\in\graph{V}$. Using only local communication subordinate to the firing sequence $\tau$, evolve the initial condition $\vec{x}[0]$ to a section $\vec{x}\in\Gamma(\sheaf{F})$.  

This problem has elements of consensus (because of the local agreement implied in a section) as well as data fusion (due to the lattice maps merging data from vertex lattices to edge lattices).

\section{An Asynchronous Laplacian}
\label{sec:async}
Our method for solving this asynchronous constrained agreement problem is to define an asynchronous harmonic flow on the sheaf by localizing the Tarski Laplacian of \cite{ghrist2020cellular}.

\subsection{The Tarski Laplacian}
\label{sec:tarski}
\noindent 
Throughout, $\sheaf{F}$ is a finite lattice-valued sheaf over a network $\graph{G}$. Our first step towards a Laplacian involves preliminaries on {\em residuals} \cite{blyth2014residuation}, also known as {\em Galois connections} \cite{ore1944galois}. These are a type of {\em adjoint} or lattice-theoretic analogue of the familiar {\em Moore-Penrose pseudoinverse} in matrix algebra.

\begin{definition}
    Given a join-preserving lattice map $\varphi:\lattice{P}\to\lattice{Q}$, its \define{residual} is the map $\varphi^{+}\colon\lattice{Q}\to\lattice{P}$ given by
    \[
        \varphi^{+}(p)=\bigjoin \{q~\vert~\varphi(q) \preceq p\}.
    \]
\end{definition}

Like an adjoint, it reverses the direction of the map, resembling a pseudoinverse more closely in some cases. The following two lemmas have straightforward proofs via definitions.

\begin{lemma}
    Suppose $\varphi:\lattice{P}\to\lattice{Q}$ is join-preserving and injective. Then $\varphi^{+}\circ\varphi = id$.
\end{lemma}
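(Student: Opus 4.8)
The plan is to prove the two inequalities $x \preceq \varphi^{+}(\varphi(x))$ and $\varphi^{+}(\varphi(x)) \preceq x$ for every $x \in \lattice{P}$, and then invoke antisymmetry of $\preceq$. The first inequality is immediate and does not even require injectivity: unwinding the definition of the residual gives $\varphi^{+}(\varphi(x)) = \bigjoin\{q \mid \varphi(q)\preceq\varphi(x)\}$, and since $\varphi(x)\preceq\varphi(x)$ the element $x$ itself belongs to the indexing set, hence lies below the join.

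The reverse inequality is where the work lies. First I would set $y := \varphi^{+}(\varphi(x))$ and observe that, because the lattices are finite and $\varphi$ preserves (finite) joins, applying $\varphi$ distributes over the join defining $y$, so that $\varphi(y) = \bigjoin\{\varphi(q)\mid\varphi(q)\preceq\varphi(x)\}$. Every term in this join is $\preceq\varphi(x)$ by construction, so the join itself satisfies $\varphi(y)\preceq\varphi(x)$.

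The crux, and the step I expect to be the main obstacle, is passing from $\varphi(y)\preceq\varphi(x)$ back to $y\preceq x$; that is, showing that an injective join-preserving map reflects the order. I would isolate this as a small lemma: given $\varphi(a)\preceq\varphi(b)$, join-preservation yields $\varphi(a\join b) = \varphi(a)\join\varphi(b) = \varphi(b)$, and then injectivity forces $a\join b = b$, which is exactly $a\preceq b$ by the order/algebra dictionary recorded just after the definition of a lattice. The essential role of injectivity is precisely to cancel $\varphi$ and recover an equality of lattice elements from an equality of their images. Applying this lemma with $a = y$ and $b = x$ gives $y\preceq x$.

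Combining the two inequalities via antisymmetry yields $\varphi^{+}(\varphi(x)) = x$ for all $x$, i.e. $\varphi^{+}\circ\varphi = \id$. The one technical point worth double-checking is that join-preservation, which is stated for binary joins (together with $\varphi(\bot)=\bot$), suffices for the finite join appearing in the definition of $\varphi^{+}$; this follows by a routine induction on the (finite) indexing set using associativity of $\join$.
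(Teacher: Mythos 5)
Your proof is correct. The paper offers no proof of this lemma (it is dismissed as ``straightforward via definitions''), and your argument --- the unit inequality $x \preceq \varphi^{+}(\varphi(x))$ from the definition of the residual, plus the observation that an injective join-preserving map reflects order (via $\varphi(a\join b)=\varphi(b)$ and cancellation), with the finiteness caveat for the big join handled --- is exactly the expected way to fill that gap.
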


\begin{lemma}
\label{lem:residuated}
    For $\varphi:\lattice{P}\to\lattice{Q}$ join-preserving, the following identities hold:
    \begin{enumerate}
        \item for all $p \in \lattice{P}$, $\varphi^{+}\circ\varphi(p) \succeq p$; and
        \item  for all $p \in \lattice{P}$ and $q \in \lattice{Q}$,
        \begin{equation}
        \label{eq:swap}
            \varphi(p) \preceq q \, \Leftrightarrow \, p\preceq \varphi^{+}(q) .
        \end{equation}
    \end{enumerate}
\end{lemma}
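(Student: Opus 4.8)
The plan is to recognize these as the two defining properties of a Galois connection and to read them off the definition of the residual, with a single auxiliary inequality doing all the work. That inequality is $\varphi(\varphi^{+}(q)) \preceq q$ for every $q \in \lattice{Q}$. To prove it, write $\varphi^{+}(q) = \bigjoin S$ with $S = \{x \in \lattice{P} \mid \varphi(x) \preceq q\}$. Since $\lattice{P}$ is finite, $S$ is a finite set, and it is nonempty because $\varphi(\bot) = \bot \preceq q$ puts $\bot \in S$; so the join is well-defined. The crucial move is that $\varphi$ distributes over this join: from $\varphi(x \join y) = \varphi(x) \join \varphi(y)$ together with $\varphi(\bot)=\bot$, an induction on the (finite) size of $S$ gives $\varphi(\bigjoin S) = \bigjoin_{x \in S}\varphi(x)$. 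As each term satisfies $\varphi(x) \preceq q$ by definition of $S$, so does their join, which is exactly $\varphi(\varphi^{+}(q)) \preceq q$.

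With this in hand, the adjunction (2) is formal. For the forward implication, if $\varphi(p) \preceq q$ then $p \in S$, hence $p \preceq \bigjoin S = \varphi^{+}(q)$. For the converse, if $p \preceq \varphi^{+}(q)$ then applying $\varphi$---order-preserving since join-preserving maps are automatically order-preserving---gives $\varphi(p) \preceq \varphi(\varphi^{+}(q)) \preceq q$ by the auxiliary inequality. Part (1) is then a one-line corollary: instantiating $q = \varphi(p)$ in (2), the trivial $\varphi(p) \preceq \varphi(p)$ yields $p \preceq \varphi^{+}(\varphi(p))$, i.e.\ $\varphi^{+}\circ\varphi(p) \succeq p$. (Equally, (1) follows directly by observing that $p \in S$ when $q=\varphi(p)$.)

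The only genuine subtlety---and the step I expect to need the most care---is the distribution of $\varphi$ over the join defining $\varphi^{+}(q)$: the hypothesis supplies only binary join-preservation and $\varphi(\bot)=\bot$, so I must lean on finiteness of the lattice to reduce the join over $S$ to an iterated binary join and then induct, with nonemptiness of $S$ ensuring the empty-join edge case never arises. Everything after that distributivity step is pure order-theoretic bookkeeping.
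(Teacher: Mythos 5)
Your proof is correct; the paper itself gives no argument for this lemma, merely declaring that it "has a straightforward proof via definitions," and what you have written is exactly the standard Galois-connection argument that the authors are implicitly invoking. You also correctly isolate the one point that genuinely needs justification---that $\varphi$ distributes over the (possibly large) join defining $\varphi^{+}(q)$---and your appeal to finiteness to reduce this to iterated binary joins is consistent with the paper's standing assumption that all lattices are finite (and with its footnote noting that, absent finiteness, one must instead require preservation of arbitrary joins).
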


A lattice-theoretic analogue of the graph Laplacian -- the \define{Tarski Laplacian} -- was introduced in \cite{ghrist2020cellular}. For $\sheaf{F}$ a network sheaf of lattices over $\graph{G}$ and $\vec{x}$ an assignment of vertex data, the Tarski Laplacian $L$ acts as:
\begin{equation}
\label{eq:Tarski}
  \left(L\vec{x}\right)_i 
  = 
  \bigmeet_{j \in \graph{N}_{i}} \sheaf{F}^{+}_{i\fc ij} \sheaf{F}_{j \fc  ij}(x_j).
\end{equation}

\noindent 
The key construct is to localize this operator subordinate to a firing sequence $\tau$.

\begin{definition}
\label{def:async-Tarski}
    The \define{asynchronous Tarski Laplacian} is the map 
    \[L: \N \times \prod_{i \in \graph{V}} \sheaf{F}(i) \to \prod_{i \in \graph{V}} \sheaf{F}(i) \]
    which acts on an assignment $\vec{x}$ as
    \begin{equation} 
    \label{eq:async-laplacian}
        \left(L_t\vec{x}\right)_i = \bigmeet_{j \in \graph{N}_{i} \cap \tau_t} \sheaf{F}^{+}_{i \fc ij}\sheaf{F}_{j \fc  ij}(x_j).
    \end{equation}
\end{definition}

This is a restriction of the Tarski Laplacian (\ref{eq:Tarski}) in that at time $t$, only immediate neighbors to broadcasting nodes are updated; all other nodes are unchanged. In the extreme of a firing sequence where all nodes broadcast at all times, the full Tarski Laplacian ensues.   

\subsection{Gossip and harmonic states}
\label{sec:heat}
\noindent 
The rationale for the {\em Laplacian} moniker lies in the efficacy of $L$ as a diffusion operator on states. Given an initial state $\vec{x}[0]$, \define{heat flow} is defined by the following discrete-time dynamical system:
\begin{align}
\label{eq:heatflow}
        \vec{x}[t+1] & = \left(\id \meet L_t \right)\vec{x}[t]
\end{align}
with initial condition $\vec{x}[0] \in \prod_{i \in \graph{V}} \sheaf{F}(i)$.

Heat flow is the analogue of iterating the \define{random-walk} or \define{Perron} operator  $I- \epsilon L$ (where $I$ is the identity matrix, $L$ the graph Laplacian matrix, $\epsilon > 0$) on scalar-valued data on a graph\cite{olfati2007consensus}. The principal result of this work is a type of ``Hodge Theorem:'' the \define{harmonic} states (the equilibria of $\id\meet L$ is a proxy for the kernel of the Laplacian) are exactly the globally consistent solutions to the sheaf. 

\begin{theorem}[Main Theorem]\label{thm:main}
For any finite lattice-valued network sheaf $\sheaf{F}$ with join-preserving structure maps and any firing sequence $\tau$ satisfying liveness, the sections of $\sheaf{F}$, $\Gamma(\sheaf{F})$, are precisely the time-independent solutions to heat flow (\ref{eq:heatflow}). 
\end{theorem}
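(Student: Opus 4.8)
The plan is to characterize the time-independent solutions directly as the simultaneous fixed points of the operators $\id \meet L_t$, and then to show that these fixed-point conditions are equivalent, edge by edge, to the section equations. First I would observe that $\vec{x}$ is time-independent precisely when $(\id \meet L_t)\vec{x} = \vec{x}$ for every $t$. Reading this off coordinatewise and using the lattice identity $a \meet b = a \Leftrightarrow a \preceq b$, the $i$-th component gives $x_i \preceq (L_t\vec{x})_i$. Since a meet dominates $x_i$ if and only if each of its arguments does, this splits into
\[
  x_i \preceq \sheaf{F}^{+}_{i \fc ij}\sheaf{F}_{j \fc ij}(x_j) \qquad \text{for every } j \in \graph{N}_i \cap \tau_t .
\]
A node with no active neighbor contributes the empty meet, which we take to be the top element, so its coordinate is automatically fixed, matching the intended ``unchanged'' behavior.

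The engine of the argument is the residual swap (\ref{eq:swap}) of Lemma \ref{lem:residuated}: taking $q = \sheaf{F}_{j \fc ij}(x_j)$ rewrites the displayed inequality as $\sheaf{F}_{i \fc ij}(x_i) \preceq \sheaf{F}_{j \fc ij}(x_j)$, converting the pulled-back residual condition into a direct comparison inside the edge lattice $\sheaf{F}(ij)$. Now I would invoke Liveness (Assumption \ref{ass:liveness}): for each edge $ij$, node $j$ fires at some time, forcing $\sheaf{F}_{i \fc ij}(x_i) \preceq \sheaf{F}_{j \fc ij}(x_j)$, and, symmetrically, node $i$ fires at some (possibly different) time, forcing the reverse inequality $\sheaf{F}_{j \fc ij}(x_j) \preceq \sheaf{F}_{i \fc ij}(x_i)$. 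Antisymmetry of $\preceq$ then yields $\sheaf{F}_{i \fc ij}(x_i) = \sheaf{F}_{j \fc ij}(x_j)$, which is exactly the section condition.

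For the converse I would run the same chain backwards: if $\vec{x}$ is a section, the edge equalities supply both inequalities for free, the swap (\ref{eq:swap}) returns $x_i \preceq \sheaf{F}^{+}_{i \fc ij}\sheaf{F}_{j \fc ij}(x_j)$ for every neighbor and every $t$, so each coordinate meet dominates $x_i$ and $(\id \meet L_t)\vec{x} = \vec{x}$ identically in $t$. The two directions together give the claimed equality of sets. I do not expect a serious analytic obstacle; the subtle point to handle carefully is the role of Liveness, which is what upgrades the one-sided, firing-dependent inequalities into the two-sided antisymmetric equalities defining a section. Without it one would recover only a directed, and generally strictly weaker, compatibility relation. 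A secondary bookkeeping matter is the empty-meet convention for inactive nodes, which must be the top element so that such coordinates are genuinely fixed.
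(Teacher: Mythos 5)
Your proposal is correct and follows essentially the same route as the paper's proof: both directions hinge on rewriting time-independence as $x_i \preceq (L_t\vec{x})_i$, applying the residual swap of Lemma~\ref{lem:residuated} to convert this into the one-sided edge inequality $\sheaf{F}_{i \fc ij}(x_i) \preceq \sheaf{F}_{j \fc ij}(x_j)$, and then using liveness to obtain the reverse inequality at a later firing time, with antisymmetry yielding the section condition. Your treatment is, if anything, slightly cleaner in making the empty-meet convention explicit and in noting that constancy of $\vec{x}$ lets the two firing-time inequalities be compared directly.
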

\begin{proof}
    Suppose first that $\vec{x}[t]\in\Gamma(\sheaf{F})$ is a section. Then, for all $ij \in \graph{E}$,
    \[
      \sheaf{F}_{i \fc  ij}(x_i[t]) = \sheaf{F}_{j \fc  ij}(x_j[t]).
    \]
    Hence, by Lemma \ref{lem:residuated},

    \begin{align*}
    \left( L_t \vec{x}[t]\right)_i & = \bigmeet_{j \in \graph{N}_{i} \cap \tau_t} \sheaf{F}^{+}_{i\fc j}\sheaf{F}_{j \fc  ij}(x_j[t]) \\
    & = \bigmeet_{j \in \graph{N}_{i} \cap \tau_t} \sheaf{F}^{+}_{i\fc j} \sheaf{F}_{i \fc  ij}(x_i[t]) \\
    & \succeq x_i[t] .
    \end{align*}

    Then, $x_i[t+1] = \left(L_t\vec{x}[t]\right)_i \meet x_i[t] = x_i[t]$. Hence, $x_i[t+1]=x_i[t]$ for all $i \in \graph{V}$. Inducting in $t$, sections are time-independent.

    Conversely, suppose $\vec{x}[t]=\vec{x}[0]$ is a time-independent solution. Then, by (\ref{eq:heatflow}),
    \begin{equation}
        \left(L_t\vec{x}[0]\right)_i \succeq x_i[0]
    \end{equation}
    for all $i \in \graph{V}$ and all $t\in\N$. 
    Thus, for all $i \in \graph{V}$ and $j \in \graph{N}_{i} \cap \tau_t$,
    \begin{align*}
        \sheaf{F}^{+}_{i\fc j}\sheaf{F}_{j \fc  ij}(x_j[0]) & \\
        \succeq 
        \bigmeet_{j \in \graph{N}_{i} \cap \tau_{t}} & \sheaf{F}^{+}_{i\fc j}\sheaf{F}_{j \fc  ij}(x_j[0]) \\
        & \succeq x_i[0] ,
    \end{align*}
    using the definition of the asynchronous Tarski Laplacian. 
    Applying Lemma~\ref{lem:residuated} yields:
    \begin{align}
    \label{eq:succeq}
        \sheaf{F}_{j \fc  ij}(x_j[0]) & \succeq \sheaf{F}_{i \fc  ij}(x_i[0]).
    \end{align}
    Suppose $\eqref{eq:succeq}$ holds for a particular $i \in \graph{V}$ and a $j \in \graph{N}_{i} \cap \tau_{t}$.
    By liveness, there exist $t'\geq t$ such that $i \in \tau_{t'}$ so that $i \in \graph{N}_{j} \cap \tau_t$. In particular, there is a smallest $t'>t$ with
    \begin{align} 
    \label{eq:preceq}
        \sheaf{F}_{j \fc  ij}(x_j[t']) & \preceq \sheaf{F}_{i \fc  ij}(x_i[t']).
    \end{align}
    By hypothesis, $x_{i}[0] = x_i[t']$ for all $i \in \graph{V}$. Hence, equations \eqref{eq:succeq} and \eqref{eq:preceq} imply $\sheaf{F}_{i \fc  ij}(x_i[t]) =  \sheaf{F}_{j \fc ij}(x_j[t])$ for all $t$. Therefore, $\vec{x}[t]$ is a section.
\end{proof}
\textbf{}
The argument in the proof of Theorem \ref{thm:main} in immediately implies an iterative protocol we call \define{gossip} for an agent $i \in \graph{V}$ to converge to a harmonic state by firing a finite number of times. The protocol is exactly the localized heat flow (\ref{eq:heatflow}). 

Local assignments $x_i[0]\in \sheaf{F}(i)$ are initialized by all nodes, then, in a series of rounds $t = 0,1,2,\dots$, proceed as follows. Each node $i$ listens for adjacent nodes broadcasting their ``fused observation'' $\sheaf{F}_{j \fc ij}\left(x_j[t]\right),~j \in N_i$. Upon receiving the ``encoded'' message, $i$ applies the residual map $\sheaf{F}_{i \fc ij}^{+}$ to $\sheaf{F}_{j \fc ij}\left(x_j[t]\right)$. At the end of each round, $i$ aggregates all of the ``decoded'' messages, including the original local assignment $x_i[t]$, taking meets. The resulting element of $\sheaf{F}(i)$ is the local assignment for the new round, $x_i[t+1]$. Finiteness of $\sheaf{F}(i)$  implies the gossip algorithm terminates in finitely many iterations since $\vec{x}[t+1] \preceq \vec{x}[t]$, or else $\vec{x}[t]$ is a section.\footnote{In fact, a weaker condition, that $\prod_{i \in \graph{V}}$ satisfies a descending chain condition, is possible if we require $\sheaf{F}_{i \fc ij}$ to preserve joins of an arbitrary subset.}

\section{Semantics}
\label{sec:semantics}

Examples of the Tarski Laplacian and heat flow are especially well-suited to distributed multimodal logic, where: vertices of a network correspond to agents; edges are communications between agents; assignments are sets of states associated to each vertex; and sections are assignments that express a consensus of knowledge across the network. For this we require the basic theory of Kripke semantics \cite{fagin2004reasoning}.

\subsection{Kripke semantics}
\label{sec:kripke}

\begin{definition}
A \define{Kripke frame} $\kripkeframe = \left(S,\relation{K}_1, \relation{K}_2, \dots, \relation{K}_n \right)$ consists of a set of \define{states}, $S$, and a sequence of binary relations $\relation{K}_i \subseteq S \times S$ used to encode modal operators. A \define{Kripke model} over a set $\Phi$ of atomic propositions consists of the data $\mathcal{M} = \left(\kripkeframe, \pi \right)$, where $\pi: S \to 2^{\Phi}$ validates whether or not a state $s \in S$ satisfies an atomic proposition $p \in \Phi$.
\end{definition}

Suppose $\phi$ is a formula; then one writes $(\mathcal{M}, s) \models \phi$ if $s$ {\em satisfies} the formula $\phi$ in the model $\mathcal{M}$. 
The semantics of a Kripke model is defined inductively using the symbols $\mathtt{true}$, $\wedge$, and $\neg$ in their typical usage:
\begin{enumerate}
\item $(\kripke{M},s) \models \mathtt{true}$ for all $s \in S$.
\item For $p \in \Phi$ atomic, $(\kripke{M},s) \models p$ if and only if $p \in \pi(s)$.
\item For $\phi$ an formula, $(\kripke{M}, s) \models \neg \phi$ if and only if $(\kripke{M}, s) \not \models \phi$.
\item For $\phi, \psi$ formulae, $(\kripke{M}, s) \models \phi \wedge \psi$ if and only if $(\kripke{M}, s) \models \phi$ and $(\kripke{M}, s) \models \psi$.
\end{enumerate}
There are additional (dual) modal operators on formulae, $K_i$ and $P_i$ (typically corresponding to knowledge and possibility), based on the binary relations $\relation{K}_i$. One writes $(\kripke{M}, s) \models {K}_i \phi$ if and only if $(\kripke{M}, t) \models \phi$ for all $t \in S$ such that $(s,t) \in \relation{K}_i$.
The dual operators $P_i$ are defined via $P_i \phi = \neg K_i \neg \phi$. Standard operations in propositional logic such as $\phi \to \psi$ are derived in the usual way \cite{mendelson2009introduction}. By abuse of notation, we write $\kripke{M} \models \phi$ if $(\kripke{M},s) \models \phi$ for all $s \in S$.

Depending on a number of axioms placed on the modal operators $K_i$ and $P_i$, one has rich interpretations for $K_i$ and $P_i$. For instance, the Knowledge Axiom \cite{fagin2004reasoning}
\begin{align}\label{eq:knowledge-axiom}
    \kripke{M} \models & K_i \phi \to \phi
\end{align}
and the Introspection Axiom \cite{fagin2004reasoning}
\begin{align}
    \kripke{M} \models & K_i \phi \to K_i K_i \phi
\end{align}
together suggest the interpretation of $K_i \phi$ and $P_i \phi$: \textit{agent $i$ knows $\phi$} and \textit{agent $j$ considers $\phi$ possible}, respectively. On the other hand, if \eqref{eq:knowledge-axiom} not hold, but the Consistency Axiom \cite{fagin2004reasoning} does hold,

\begin{align}
    \kripke{M} \models & \neg K_i (\mathtt{false}),
\end{align}
 $K_i \phi$ and $P_i \phi$ could be interpreted as: \textit{agent $i$ believes $\phi$} and \textit{agent $i$ does not disbelieve (i.e., is undecided about) $\phi$}.\footnote{Belief is assumed to satisfy the law of excluded middle: believing something does not make it true.}

The set of all finite formulae inductively obtained by $\Phi$ is called the \define{language} denoted $\kripke{L}(\Phi)$ (omitting the $\Phi$ where understood). For $\phi \in \kripke{L}$, and model $\kripke{M}$, the \define{intent} of $\phi$ is the subset
\[\phi^{\kripke{M}} = \{s \in S~\vert~(\kripke{M},s) \models \phi \}.\]
Formulae $\phi, \psi \in \kripke{L}$ are \define{semantically equivalent} if $\phi^\kripke{M} = \psi^\kripke{M}$. Semantic equivalence is an equivalence relation, written $\phi \equiv \psi$, with $\kripke{L}_{\equiv}(\kripke{M})$ denoting the set of equivalence classes in $\kripke{L}$ up to semantic equivalence.

\subsection{Semantic diffusion}
\label{sec:semanticdiffusion}
\noindent Our goal is to adapt the technology of sheaves of lattices and Laplacians to Kripke semantics over a network of agents. One simple approach is to use powerset lattices $2^S$ of states $S$ of a frame. Let $\graph{G}=(\graph{V},\graph{E})$ be a network and $\kripkeframe$ a frame. Define the \define{semantic sheaf} $\sheaf{I}$ over $\graph{G}$ so that the data over each vertex and edge is precisely $2^S$. The following is crucial to define the structure-preserving maps:
\begin{definition}\label{thm:exists-forall}
For $\kripkeframe = \left(S, \relation{K}_1, \dots, \relation{K}_n \right)$ a frame and for each $i \in \{0,1,\dots,n\}$, there is residual pair
\begin{equation}
\begin{tikzcd}
\bool^S \arrow[r,"\relation{K}_i^{\exists}",bend left=30] \arrow[r,"\bot", phantom] & \arrow[l, "\relation{K}_i^{\forall}", bend left = 30] \bool^S
\end{tikzcd}
\end{equation}
given by the formulae
\begin{align*}
\relation{K}_i^{\exists}(\sigma) &= \{t \in S~\vert~\exists  s \in \sigma,~(s,t) \in \relation{K}_i \}, \\
\relation{K}_i^{\forall}(\sigma) &= \{s \in S~\vert~\forall t \in S, (s,t) \in \relation{K}_i \Rightarrow t \in \sigma  \}.
\end{align*}
\end{definition}

\begin{lemma}
      Each map $\relation{K}_i^{\exists}: 2^S \to 2^S$ is join-preserving and $\left(\relation{K}_i^{\exists}\right)^{+}=\relation{K}_i^{\forall}$.
\end{lemma}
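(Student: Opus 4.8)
The plan is to handle the two claims separately, working throughout in the powerset lattice where $\join = \cup$, $\meet = \cap$, $\bot = \emptyset$, $\top = S$, and $\preceq\;=\;\subseteq$.

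First I would verify that $\relation{K}_i^{\exists}$ preserves joins. Preservation of $\bot$ is immediate, since $\relation{K}_i^{\exists}(\emptyset) = \emptyset$: there is no witness $s \in \emptyset$. For binary (and in fact arbitrary) joins, unwinding the definition shows that $t \in \relation{K}_i^{\exists}(\sigma \cup \sigma')$ holds iff some witness $s \in \sigma \cup \sigma'$ satisfies $(s,t) \in \relation{K}_i$, and the existential quantifier distributes over the disjunction ``$s \in \sigma$ or $s \in \sigma'$'', yielding $\relation{K}_i^{\exists}(\sigma \cup \sigma') = \relation{K}_i^{\exists}(\sigma) \cup \relation{K}_i^{\exists}(\sigma')$. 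This is a one-line logical computation.

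The substantive claim is $(\relation{K}_i^{\exists})^{+} = \relation{K}_i^{\forall}$, and the key step is to establish the adjunction
\[
\relation{K}_i^{\exists}(\sigma) \subseteq \tau \iff \sigma \subseteq \relation{K}_i^{\forall}(\tau).
\]
I would prove this by unwinding both sides to the single first-order sentence ``for all $s \in \sigma$ and all $t \in S$, $(s,t) \in \relation{K}_i \Rightarrow t \in \tau$.'' The left-hand inclusion says precisely that every $t$ reachable from some $s \in \sigma$ lies in $\tau$; the right-hand inclusion says precisely that every $s \in \sigma$ belongs to $\relation{K}_i^{\forall}(\tau)$, which by definition is the same condition. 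From the adjunction I would derive the residual identity by two inclusions against the defining formula $(\relation{K}_i^{\exists})^{+}(\tau) = \bigcup \{\sigma \mid \relation{K}_i^{\exists}(\sigma) \subseteq \tau\}$. Taking $\sigma = \relation{K}_i^{\forall}(\tau)$ (using reflexivity of $\subseteq$ on the right) gives $\relation{K}_i^{\exists}(\relation{K}_i^{\forall}(\tau)) \subseteq \tau$, so $\relation{K}_i^{\forall}(\tau)$ belongs to the indexing family and hence $\relation{K}_i^{\forall}(\tau) \subseteq (\relation{K}_i^{\exists})^{+}(\tau)$. Conversely, every $\sigma$ in that family satisfies $\relation{K}_i^{\exists}(\sigma) \subseteq \tau$, so by the adjunction $\sigma \subseteq \relation{K}_i^{\forall}(\tau)$; taking the union over all such $\sigma$ gives $(\relation{K}_i^{\exists})^{+}(\tau) \subseteq \relation{K}_i^{\forall}(\tau)$, and the two inclusions yield equality.

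I do not expect a genuine obstacle; the entire lemma is bookkeeping with quantifiers. The one point requiring care is the orientation of the residual and the order convention: because $\relation{K}_i^{\exists}$ is join-preserving (a left adjoint), its residual $\relation{K}_i^{\forall}$ is meet-preserving (the right adjoint), so I would double-check that the inclusion in the adjunction reads $\relation{K}_i^{\exists}(\sigma) \subseteq \tau \iff \sigma \subseteq \relation{K}_i^{\forall}(\tau)$ and not the reverse, matching the swap identity (\ref{eq:swap}) of Lemma \ref{lem:residuated} under the substitution $\varphi = \relation{K}_i^{\exists}$, $p = \sigma$, $q = \tau$.
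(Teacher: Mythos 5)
Your proposal is correct and follows essentially the same route as the paper: both parts come down to unwinding the quantifiers in the definitions of $\relation{K}_i^{\exists}$ and $\relation{K}_i^{\forall}$, with the residual identity obtained by evaluating $\bigcup\{\alpha \mid \relation{K}_i^{\exists}(\alpha)\subseteq\sigma\}$. Your version is slightly more structured (you isolate the Galois adjunction $\relation{K}_i^{\exists}(\sigma)\subseteq\tau \Leftrightarrow \sigma\subseteq\relation{K}_i^{\forall}(\tau)$ and then argue by two inclusions, and you also check preservation of $\bot$), whereas the paper collapses this into a single direct set computation, but the underlying argument is the same.
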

\begin{proof}
    First,
    \begin{align*}
        \relation{K}_i^{\exists}(\sigma \cup \sigma') &= \{t \in S~\vert~\exists~s~\text{or}~s' \in \sigma,~(s,t)~\text{or}~(s',t) \in \relation{K}_i \}.
    \end{align*}
    Second,
    \begin{align*}
        \left(\relation{K}_i^{\exists}\right)^{+}(\sigma) & = \bigcup \{\alpha \in 2^S~\vert~\relation{K}_i^{\exists}(\alpha) \subseteq \sigma \} \\
             & = \{s \in S~\vert~\forall t \in S, (s,t) \in \relation{K}_i \Rightarrow t \in \sigma \} \\
             & = \relation{K}_i^{\forall}(\sigma) .
    \end{align*}
\end{proof}

Such a sheaf of powerset lattices has an asynchronous Tarski Laplacian and a corresponding heat flow. The following definitions are straight translations from \S\ref{sec:async}.

\begin{definition}\label{def:semantic-laplacian}
    Suppose $\kripkeframe = (S, \relation{K}_1, \dots, \relation{K}_n)$ is a frame and $\sheaf{I}$ a sheaf of powersets $2^S$ over a network $\graph{G} = (\graph{V},\graph{E})$ where $\graph{V} = \{1,2, \dots, n\}$. Let $\tau: \N \to \graph{V}$ be a firing sequence. The (asynchronous) \textit{semantic Laplacian} is the operator acting on $\boldsymbol \sigma \in \prod_{i \in \graph{V}} 2^S$ via: 
    \begin{equation}
    \label{eq:semlap}
    \left( L_t \boldsymbol\sigma \right)_i 
    = 
    \bigcap_{j \in \graph{N}_{i} \cap \tau_t} 
    \relation{K}_i^{\forall}\relation{K}_j^{\exists}(\sigma_j) .
    \end{equation}
The associated \define{heat flow} is the dynamical system
    \begin{equation}
    \label{eq:positive-flow}
    {\boldsymbol \sigma}[t+1] = (\id \meet L_t) {\boldsymbol \sigma}[t]
    \end{equation}
\end{definition}

\noindent Our main result follows directly from Theorem \ref{thm:main}, interpreted in the language of this section.
\begin{theorem}
    Suppose $\graph{G} = (\graph{V}, \graph{E})$ is a network with firing sequence $\tau$ satisfying liveness. Let $\kripke{M} = (S, \relation{K}_1, \dots, \relation{K}_N, \Phi, \pi)$ be a Kripke model. Then, the sections of $\sheaf{I}$ are exactly time-independent solutions to the heat flow \eqref{eq:heatflow}.
\end{theorem}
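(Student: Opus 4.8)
The plan is to realize the semantic sheaf $\sheaf{I}$ as a particular finite lattice-valued network sheaf with join-preserving structure maps, and then invoke Theorem~\ref{thm:main} verbatim. Concretely, I would take the stalk over every vertex and edge to be the powerset lattice $(2^S, \cap, \cup, \emptyset, S)$, and declare the structure map out of vertex $i$ onto each incident edge $ij$ to be $\sheaf{I}_{i \fc ij} = \relation{K}_i^{\exists}$. This assignment is well defined because $\relation{K}_i^{\exists}$ depends only on $i$ and not on the choice of neighbor $j$, so every edge $ij$ receives a consistent pair of restriction maps $\relation{K}_i^{\exists}$ and $\relation{K}_j^{\exists}$, exactly as a cellular sheaf requires.

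The next step is to check that $\sheaf{I}$ meets the hypotheses of Theorem~\ref{thm:main}. Finiteness holds provided the state set $S$ is finite, in which case $2^S$ is a finite lattice; I would record this as a standing assumption on the Kripke model $\kripke{M}$. Join-preservation of each $\sheaf{I}_{i \fc ij} = \relation{K}_i^{\exists}$ is precisely the content of the preceding Lemma, which moreover identifies the residual as $(\relation{K}_i^{\exists})^{+} = \relation{K}_i^{\forall}$. Hence the abstract residual $\sheaf{I}^{+}_{i \fc ij}$ appearing in the Tarski Laplacian coincides with $\relation{K}_i^{\forall}$.

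With this dictionary fixed, I would verify that the semantic constructions are literal instances of the abstract ones. Substituting $\sheaf{I}_{i \fc ij} = \relation{K}_i^{\exists}$, $\sheaf{I}^{+}_{i \fc ij} = \relation{K}_i^{\forall}$, and $\meet_{ij} = \cap$ into the asynchronous Tarski Laplacian~\eqref{eq:async-laplacian} reproduces the semantic Laplacian~\eqref{eq:semlap} term by term, so the heat flow~\eqref{eq:positive-flow} is identical to~\eqref{eq:heatflow} for this sheaf. Likewise, a section of $\sheaf{I}$ is by definition an assignment $\boldsymbol\sigma$ with $\relation{K}_i^{\exists}(\sigma_i) = \relation{K}_j^{\exists}(\sigma_j)$ on every edge $ij$, which is the sheaf compatibility condition specialized to these maps. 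Theorem~\ref{thm:main} then applies directly, equating sections of $\sheaf{I}$ with the time-independent solutions of the heat flow.

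Since every ingredient is a direct specialization, there is no deep obstacle; the work is entirely bookkeeping. The points requiring the most care are confirming that the restriction maps are assigned consistently to edges, that the residual identification $(\relation{K}_i^{\exists})^{+} = \relation{K}_i^{\forall}$ from the Lemma lines up with the operator $\relation{K}_i^{\forall}$ used in~\eqref{eq:semlap}, and making explicit the finiteness of $S$ needed to land inside the scope of Theorem~\ref{thm:main}.
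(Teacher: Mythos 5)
Your proposal is correct and follows exactly the route the paper intends: the paper offers no separate proof of this theorem, stating only that it ``follows directly from Theorem~\ref{thm:main}, interpreted in the language of this section,'' and your argument simply makes that specialization explicit (stalks $2^S$, structure maps $\relation{K}_i^{\exists}$ with residuals $\relation{K}_i^{\forall}$ from the preceding lemma, and the matching of \eqref{eq:semlap} with \eqref{eq:async-laplacian}). Your observation that finiteness of $S$ must be assumed to land within the hypotheses of Theorem~\ref{thm:main} is a worthwhile point of care that the paper leaves implicit.
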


These sections are interpretable as {\em possibility consensus} assignments of the model $\kripke{M}$ on $\graph{G}$. The assignment of formulae $(\phi_i)_{i\in\graph{V}}$ satisfies, for each edge $ij\in\graph{E}$, $P_i \phi_i \equiv P_j \phi_j$. This does not mean that the formulae are in consensus as identical formulae; rather, they are semantically equivalent in the language $\kripke{L}$.

\subsection{Syntactic diffusion}
\label{sec:syntacticdiffusion}

\noindent The following lemma motivates why local residuations $\relation{K}_i^\exists \dashv \relation{K}_i^\forall$ induce a flow of knowledge.

\begin{lemma}\label{thm:galois-semantics}
Suppose $\mathcal{M} = (\kripkeframe, \pi)$ is a model and $\phi \in \kripke{L}(\kripke{M})$. Then, the following hold
\begin{align}
\relation{K}_i^{\forall}\left( \phi^{\kripke{M}}\right) &= \left( {K}_i \phi \right)^{\kripke{M}}, \label{eq:galois-semantics-1}\\
\relation{K}_i^{\exists}\left( \phi^{\kripke{M}} \right) &= \left( P_i \phi \right)^{\kripke{M}}. \label{eq:galois-semantics-2}
\end{align}
\end{lemma}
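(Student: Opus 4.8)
The plan is to prove both identities by unwinding the definitions of the intent $\phi^{\kripke{M}}$, the Kripke semantics, and the operators $\relation{K}_i^{\forall}, \relation{K}_i^{\exists}$, treating the knowledge identity \eqref{eq:galois-semantics-1} directly and then deriving the possibility identity \eqref{eq:galois-semantics-2} from it by De Morgan duality.

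First I would establish \eqref{eq:galois-semantics-1}. Starting from $(K_i\phi)^{\kripke{M}} = \{s \in S : (\kripke{M}, s) \models K_i\phi\}$, I expand the semantics of $K_i$: the condition $(\kripke{M}, s)\models K_i\phi$ holds exactly when $(\kripke{M}, t)\models\phi$ for every $t$ with $(s,t)\in\relation{K}_i$, i.e.\ when every $\relation{K}_i$-successor of $s$ lies in $\phi^{\kripke{M}}$. This is verbatim the membership condition defining $\relation{K}_i^{\forall}(\phi^{\kripke{M}})$, so the two sets coincide. This step is purely definitional and needs no hypotheses beyond those stated.

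Next I would obtain \eqref{eq:galois-semantics-2} from \eqref{eq:galois-semantics-1}. Using the defined abbreviation $P_i\phi = \neg K_i\neg\phi$ together with the negation clause of the semantics, which yields $(\neg\psi)^{\kripke{M}} = S \setminus \psi^{\kripke{M}}$, I compute
\begin{equation*}
(P_i\phi)^{\kripke{M}} = S \setminus (K_i\neg\phi)^{\kripke{M}} = S \setminus \relation{K}_i^{\forall}\!\left(S \setminus \phi^{\kripke{M}}\right),
\end{equation*}
where the last equality applies \eqref{eq:galois-semantics-1} to $\neg\phi$. It then remains to identify $S \setminus \relation{K}_i^{\forall}(S\setminus\sigma)$ with $\relation{K}_i^{\exists}(\sigma)$ for $\sigma = \phi^{\kripke{M}}$, a statement about the frame alone, independent of the model.

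The main obstacle is precisely this last set-theoretic step: reconciling the \emph{image} form $\relation{K}_i^{\exists}(\sigma)=\{t : \exists s\in\sigma, (s,t)\in\relation{K}_i\}$ with the \emph{preimage} form $S\setminus\relation{K}_i^{\forall}(S\setminus\sigma)=\{s : \exists t\in\sigma, (s,t)\in\relation{K}_i\}$ produced by the De Morgan dual. These two expressions use $\relation{K}_i$ in opposite coordinates, so one must take care with the direction of the accessibility relation; the cleanest route is to recognize that the possibility modality is the De Morgan dual of $\relation{K}_i^{\forall}$ and then either read $\relation{K}_i^{\exists}$ along the converse relation or invoke the symmetry of $\relation{K}_i$ that holds in the standard epistemic (equivalence-relation) setting, under which image and preimage agree outright. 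I expect everything else to be routine unwinding.
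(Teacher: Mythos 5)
Your treatment of \eqref{eq:galois-semantics-1} is the same definitional unwinding the paper uses and is correct. For \eqref{eq:galois-semantics-2} you go through De Morgan duality ($P_i\phi=\neg K_i\neg\phi$ plus the first identity) where the paper expands both sides directly; the computations are equivalent, and both terminate at the same final identification --- which you, unlike the paper, actually examine. The paper's proof stops after exhibiting $\relation{K}_i^{\exists}(\phi^{\kripke{M}})=\{t\in S~\vert~\exists s,\ (\kripke{M},s)\models\phi,\ (s,t)\in\relation{K}_i\}$ on one hand and $(P_i\phi)^{\kripke{M}}=\{s~\vert~\forall t,\ (s,t)\in\relation{K}_i\Rightarrow(\kripke{M},t)\not\models\phi\}^{c}=\{s~\vert~\exists t,\ (s,t)\in\relation{K}_i,\ (\kripke{M},t)\models\phi\}$ on the other, tacitly treating the two as equal. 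As you observe, the first is the \emph{image} and the second the \emph{preimage} of $\phi^{\kripke{M}}$ along $\relation{K}_i$, and these need not coincide: with $S=\{a,b\}$, $\relation{K}_i=\{(a,b)\}$, and $\phi^{\kripke{M}}=\{b\}$, one gets $\relation{K}_i^{\exists}(\{b\})=\emptyset$ but $(P_i\phi)^{\kripke{M}}=\{a\}$. So the obstacle you flag is not a defect of your argument but a genuine gap in the lemma as stated (and in the paper's proof of it): \eqref{eq:galois-semantics-2} requires either that $\relation{K}_i$ be symmetric --- which does hold in the paper's intended epistemic applications, where the $\relation{K}_i$ are equivalence relations $\sim_i$ as in \S\ref{sec:targets} --- or that $\relation{K}_i^{\exists}$ be read along the converse relation. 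Note that the adjunction $(\relation{K}_i^{\exists})^{+}=\relation{K}_i^{\forall}$ is unaffected; it is only the semantic identification with $P_i$ that needs the extra hypothesis. Once symmetry (or the converse-relation convention) is assumed, your remaining step $S\setminus\relation{K}_i^{\forall}(S\setminus\sigma)=\relation{K}_i^{\exists}(\sigma)$ goes through by a one-line complementation, and your proof is complete.
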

\begin{proof}
    Writing
    \[\relation{K}_i^{\forall}\left( \phi^{\kripke{M}}\right) = 
         \{s \in S~\vert~\forall t \in S, (s,t) \in \relation{K}_i \Rightarrow (\kripke{M},t) \models \phi \},\]
    we prove \eqref{eq:galois-semantics-1}. For \eqref{eq:galois-semantics-2}, evaluating $\relation{K}_i^{\exists}(\phi^\kripke{M})$ yields
    \begin{align*}
        \{t \in S~\vert~\exists  s~\text{such that}~(\kripke{M},s) \models \phi~\text{with}~(s,t) \in \relation{K}_i \}.
    \end{align*}
    On the other hand, evaluating $\left(P_i \phi \right)^\kripke{M}$ yields
    \begin{align*}
        \left( \neg K_i \neg \phi \right)^{\kripke{M}} & & \\
        = & \left( \left( K_i \neg \phi \right)^{\kripke{M}} \right)^c & \\
        = & \{s~\vert~ \forall~t~\text{such that}~(s,t) \in \relation{K}_i,~(\kripke{M},t) \not \models \phi \}^{c} & .
    \end{align*}
\end{proof}

\noindent The following simple observation together with Lemma \ref{thm:galois-semantics} allows us to freely go back and forth between syntax and semantics, opening the way for syntactic diffusion dynamics.

\begin{lemma}\label{lem:contravariant}
Suppose $\{\phi\}_{i}$ is a finite set of formula in $\kripke{L}(\kripke{M})$. Then, $\left(\bigjoin_{i \in I} \phi_i\right)^{\kripke{M}} = \bigcup_{i} \phi_i^{\kripke{M}}$ and $\left(\bigmeet_{i} \phi_i\right)^{\kripke{M}} = \bigcap_{i} \phi_i^{\kripke{M}}$.
\end{lemma}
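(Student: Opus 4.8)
The plan is to reduce both identities to the base-case behavior of the intent operator $(-)^{\kripke{M}}$ on the propositional connectives $\meet$ and $\neg$, and then bootstrap to an arbitrary finite index set $I$ by induction on $|I|$. Throughout I work in $\kripke{L}_{\equiv}(\kripke{M})$, where the intent map is well-defined on semantic-equivalence classes essentially by fiat: $\phi \equiv \psi$ means precisely $\phi^{\kripke{M}} = \psi^{\kripke{M}}$.

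First I would dispatch the meet identity. The inductive semantic clause for conjunction states that $(\kripke{M},s) \models \phi \meet \psi$ if and only if $(\kripke{M},s)\models \phi$ and $(\kripke{M},s)\models\psi$. Reading this off as sets of satisfying states gives the two-term base case $(\phi \meet \psi)^{\kripke{M}} = \phi^{\kripke{M}} \cap \psi^{\kripke{M}}$ immediately. Since $I$ is finite, an induction on $|I|$, using associativity of $\meet$ and of $\cap$, upgrades this to $\left(\bigmeet_{i} \phi_i\right)^{\kripke{M}} = \bigcap_{i} \phi_i^{\kripke{M}}$.

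Next I would handle the join identity by recalling that disjunction is not primitive in this language but derived, $\phi \join \psi := \neg(\neg\phi \meet \neg\psi)$. The key auxiliary fact is that negation corresponds to set complement: the semantic clause for $\neg$ gives $(\kripke{M},s)\models\neg\phi$ iff $(\kripke{M},s)\not\models\phi$, whence $(\neg\phi)^{\kripke{M}} = \left(\phi^{\kripke{M}}\right)^c$. Combining this with the meet identity just established and the set-theoretic De Morgan law yields, for the two-term case,
\[
    (\phi \join \psi)^{\kripke{M}} = \left(\left(\neg\phi\right)^{\kripke{M}} \cap \left(\neg\psi\right)^{\kripke{M}}\right)^c = \left(\left(\phi^{\kripke{M}}\right)^c \cap \left(\psi^{\kripke{M}}\right)^c\right)^c = \phi^{\kripke{M}} \cup \psi^{\kripke{M}},
\]
and a second finite induction on $|I|$ delivers the general statement $\left(\bigjoin_{i \in I} \phi_i\right)^{\kripke{M}} = \bigcup_{i} \phi_i^{\kripke{M}}$.

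The only genuine subtlety, such as it is, is the bookkeeping that identifies the abstract lattice operations $\bigjoin,\bigmeet$ on $\kripke{L}_{\equiv}(\kripke{M})$ with the logical connectives $\join,\meet$, together with the observation that $(-)^{\kripke{M}}$ descends to equivalence classes; both are essentially definitional. Everything else is a routine structural induction, and notably no modal clauses for $K_i$ or $P_i$ are needed, since only the purely propositional connectives participate.
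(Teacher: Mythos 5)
Your proof is correct. The paper offers no proof of this lemma at all---it is introduced as a ``simple observation''---and your argument (the two-term cases read off from the semantic clauses for $\wedge$ and $\neg$, De Morgan's law for the join, and a finite induction on $|I|$) is precisely the routine verification the authors are implicitly relying on.
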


\begin{definition}
\label{def:syntactic-laplacian}
Suppose $\boldsymbol \phi = (\phi_i)_{i \in \graph{V}}$ is a tuple of formulae in $\kripke{M}$.
The (asynchronous) \define{syntactic Laplacian} acts on assignments as:
\begin{equation}\label{synlap}
        \left( L_t \boldsymbol\phi\right)_i = \bigmeet_{j \in \graph{N}_{i} \cap \tau_t} K_i P_j \phi_j  .
\end{equation}
\end{definition}
This is a straight translation of the semantic Tarski Laplacian, using Lemmas \ref{thm:galois-semantics} and \ref{lem:contravariant}.

\subsection{Dual Laplacians}
\label{sec:dual}

\noindent The classical meet-join duality in lattices pushes through to all other structures built therefrom. In particular, the Tarski (and thus semantic and syntactic) Laplacians come in dual variants, implicating how syntactic and semantic consensus is interpreted. 

\begin{definition}
\label{def:dual}
The \define{dual} Tarski, semantic, and syntactic Laplacians are given by, respectively:
\begin{equation}
  \label{eq:dual-tarski}
  \left(L^*\vec{x}\right)_i 
  = 
  \bigvee_{j \in \graph{N}_{i}} \sheaf{F}^{+}_{i\fc j} \sheaf{F}_{j \fc  ij}(x_j) 
\end{equation}
\begin{equation}
  \label{eq:dual-semantic}
  \left(L^* \boldsymbol\sigma\right)_i 
  = 
  \bigcup_{j \in \graph{N}_{i}} {\relation{K}}_{i}^{\exists}{\relation{K}}_j^{\forall}(\sigma_j)
\end{equation}

\begin{equation}
  \label{eq:dual-syntactic}
  \left(L^* \boldsymbol\phi \right)_i 
  = 
  \bigjoin_{j \in \graph{N}_{i}} P_i K_j \phi_j.  \\
\end{equation}
The corresponding asynchronous dual Laplacians $L^*_t$ intersect neighboring vertices with those in a firing sequence $\tau$.
\end{definition}

The \define{dual heat flow} iterates the operator $\id\join L^*_t$. Convergence results to sections remain. These sections are interpretable as {\em knowledge consensus} assignments of the model $\kripke{M}$ on $\graph{G}$. The assignment of formulae $(\phi_i)_{i\in\graph{V}}$ satisfies, for each edge $ij\in\graph{E}$, $K_i \phi_i \equiv K_j \phi_j$.

\section{Examples}
\label{sec:examples}

In this section, we supply an example of a semantic sheaf modeling knowledge gain; we also provide a numerical experiment demonstrating the correctness of asynchronous heat flow \eqref{eq:heatflow} as well as convergence behavior. The \define{Lyapunov function} acts on assignments in \eqref{eq:heatflow} via
\begin{equation}
V(\vec{x}) = \sum_{ij \in \graph{E}} d\left(\sheaf{F}_{i \fc ij}(x_i),\sheaf{F}_{j \fc ij}(x_j)\right); \label{eq:lyp}
\end{equation}
each $d$ is a distance function on $\sheaf{F}(ij)$. If $d$ is a metric, then $V(\vec x) \geq 0$ and $V(\vec x) = 0$ if and only if $\vec x$ is a section.
Below, the metric is taken to be the well-known \define{Jaccard distance} between subsets.

\subsection{Threat detection}
\label{sec:targets}


\noindent In this example, we model the knowledge of $N$ ``smart'' sensors tasked with detecting $M$ possible targets $\Phi = \{A_1, A_2, \dots, A_M\}$. Sensors are equipped with two-way links forming a communication pattern modeled by an undirected graph $\graph{G} = (\graph{V}, \graph{E})$. Each sensor $i$ has a local state $s_i$ and a set of possible local states $S_i$ which may vary from sensor to sensor. For instance, $s_i$ could represent a risk posed at a particular location. The atomic propositions $\Phi$ are each interpreted as \emph{threat $A$ is present}. A global state $s \in \prod_{i=1}^N S_i$ is a tuple of all local states. The ground truth on whether or not threat $A$ is present is determined by a map $\pi: \prod_{i=1}^N S_i \to 2^{\Phi}$. To model the knowledge of individual sensors, the right choice of Kripke relation $\relation{K}_i$ is the equivalence relation on $S = \prod_{i=1}^N S_i$ given by $s \sim_i t$ if and only if $s_i = t_i$, reflecting the intuition that a sensor $i$ would know $A$ is present -- or, in general, any formula in $\mathcal{L}(\Phi)$ -- precisely when, given a current local state $s_i \in S_i$, $A$ was present for every possible global state that had $s_i$ as a local state.

\subsection{Numerical experiments}
\label{sec:experiments}

\noindent In the following simulation, we demonstrate the validity of the gossip algorithm by: 1) generating a random geometric graph ($N=40, r=0.08$); 2) assigning (Kripke) relation on $S$ ( with$10$ states), one for each node, by wiring each $(x,y) \in \relation{K}_i$ randomly ($p = 0.9$ if $x = y$, $p=0.1$ otherwise); 3) selecting several random firing sequence $\tau: \{0,1,\dots \} \to 2^\graph{V}$ which determine which nodes broadcast to their neighbors at time $t$. For a random initial assignment $\boldsymbol \sigma [0]$, we run the heat flow dynamics using the dual asynchronous semantic Laplacian (\ref{eq:dual-semantic}) for each firing sequence (Fig.~\ref{fig:1}).

\begin{figure}[h]
\includegraphics[width=0.5\textwidth]{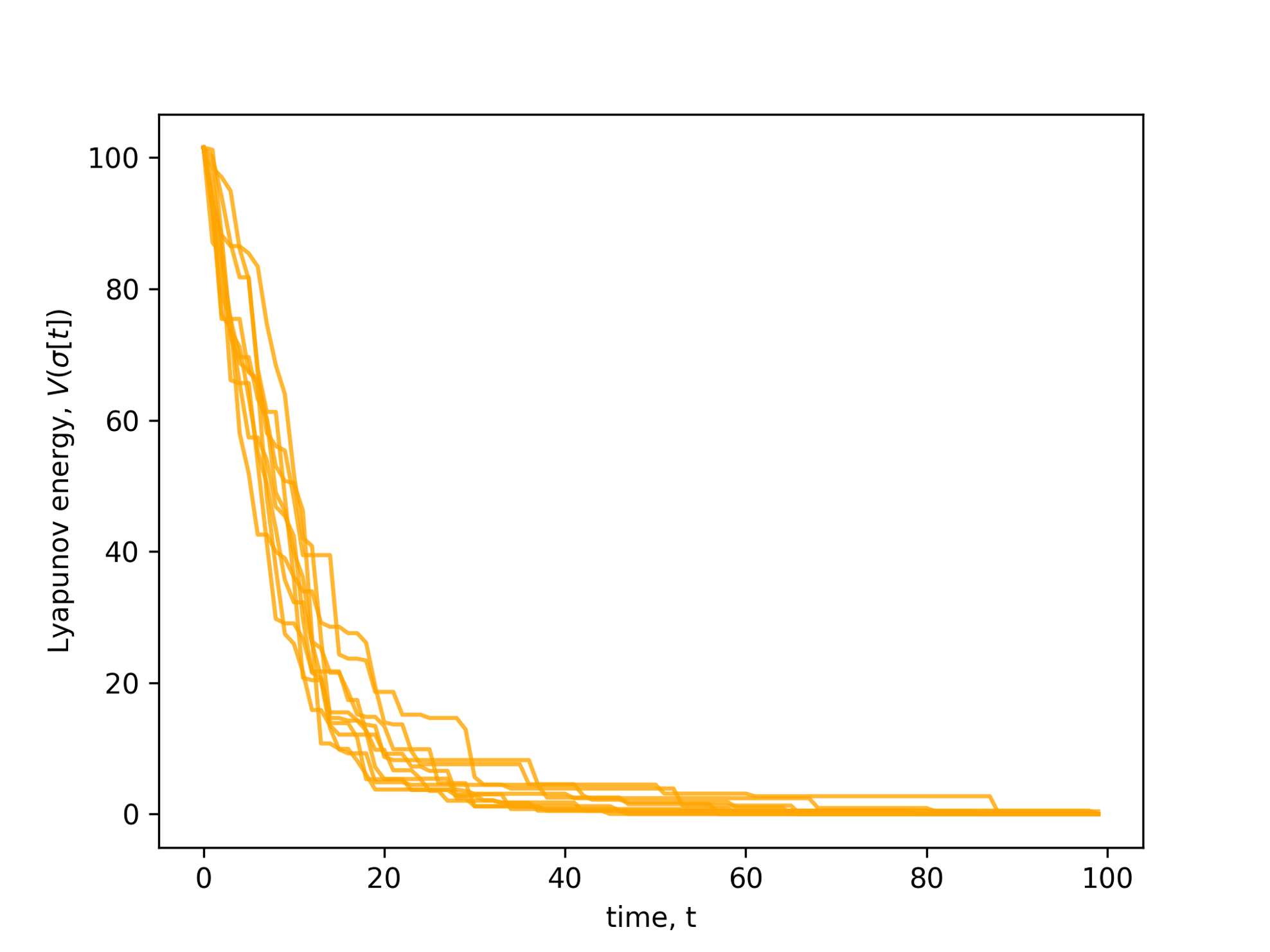}
\caption{The Lyapunov energies of each iteration of the gossip algorithm computed from the dual asynchronous semantic Laplacian of the Kripke frame defined above. Each trial is a different firing sequence $\tau$.}
\label{fig:1}
\end{figure}


\bibliographystyle{ieeetr}
\bibliography{bibliography}

\begin{thebibliography}{10}

\bibitem{taylor1968towards}
M.~Taylor, ``Towards a mathematical theory of influence and attitude change,''
  {\em Human Relations}, vol.~21, no.~2, pp.~121--139, 1968.

\bibitem{degroot1974}
M.~H. DeGroot, ``Reaching a consensus,'' {\em Journal of the American
  Statistical Association}, vol.~69, no.~345, pp.~118--121, 1974.

\bibitem{preciado2010distributed}
V.~M. Preciado, M.~M. Zavlanos, A.~Jadbabaie, and G.~J. Pappas, ``Distributed
  control of the laplacian spectral moments of a network,'' in {\em Proceedings
  of the 2010 American Control Conference}, pp.~4462--4467, IEEE, 2010.

\bibitem{preciado2015sync}
V.~Preciado and G.~Verghese, ``Synchronization in generalized {Erd\"{o}s}
  {R\'{e}nyi} networks of nonlinear oscillators,'' in {\em Proceedings of the
  44th IEEE Conference on Decision and Control}, pp.~4628--4633, 2005.

\bibitem{sepulchre2005collective}
R.~Sepulchre, D.~Paley, and N.~Leonard, ``Collective motion and oscillator
  synchronization,'' in {\em Cooperative control}, pp.~189--205, Springer,
  2005.

\bibitem{tanner2007flocking}
H.~G. Tanner, A.~Jadbabaie, and G.~J. Pappas, ``Flocking in fixed and switching
  networks,'' {\em IEEE Transactions on Automatic control}, vol.~52, no.~5,
  pp.~863--868, 2007.

\bibitem{singer_vector_2012}
A.~Singer and H.-T. Wu, ``Vector {{Diffusion Maps}} and the {{Connection
  Laplacian}},'' {\em Communications in Pure and Applied Mathematics}, vol.~65,
  no.~8, 2012.

\bibitem{BandeiraSS13}
A.~S. Bandeira, A.~Singer, and D.~A. Spielman, ``A cheeger inequality for the
  graph connection laplacian.,'' {\em SIAM J. Matrix Anal. Appl.}, vol.~34,
  no.~4, pp.~1611--1630, 2013.

\bibitem{tuna16synchronization}
S.~E. Tuna, ``Synchronization under matrix-weighted laplacian.,'' {\em
  Automatica}, vol.~73, pp.~76--81, 2016.

\bibitem{ortega2018graph}
A.~Ortega, P.~Frossard, J.~Kova{\v{c}}evi{\'c}, J.~M. Moura, and
  P.~Vandergheynst, ``Graph signal processing: Overview, challenges, and
  applications,'' {\em Proceedings of the IEEE}, vol.~106, no.~5, pp.~808--828,
  2018.

\bibitem{shuman2013emerging}
D.~I. Shuman, S.~K. Narang, P.~Frossard, A.~Ortega, and P.~Vandergheynst, ``The
  emerging field of signal processing on graphs: Extending high-dimensional
  data analysis to networks and other irregular domains,'' {\em IEEE signal
  processing magazine}, vol.~30, no.~3, pp.~83--98, 2013.

\bibitem{welling2016semi}
M.~Welling and T.~N. Kipf, ``Semi-supervised classification with graph
  convolutional networks,'' in {\em J. International Conference on Learning
  Representations (ICLR 2017)}, 2016.

\bibitem{ruiz2021graph}
L.~Ruiz, F.~Gama, and A.~Ribeiro, ``Graph neural networks: architectures,
  stability, and transferability,'' {\em Proceedings of the IEEE}, vol.~109,
  no.~5, pp.~660--682, 2021.

\bibitem{bredon1997sheaf}
G.~E. Bredon, {\em Sheaf {{Theory}}}.
\newblock No.~170 in Graduate {{Texts}} in {{Mathematics}}, {Springer}, 2~ed.,
  1997.

\bibitem{janis2015poset}
V.~Janis, S.~Montes, B.~Seselja, and A.~Tepavcevic, ``Poset-valued preference
  relations,'' {\em Kybernetika}, vol.~51, no.~5, pp.~747--764, 2015.

\bibitem{anevlavis2022being}
T.~Anevlavis, M.~Philippe, D.~Neider, and P.~Tabuada, ``Being correct is not
  enough: efficient verification using robust linear temporal logic,'' {\em ACM
  Transactions on Computational Logic (TOCL)}, vol.~23, no.~2, pp.~1--39, 2022.

\bibitem{puschel2021discrete}
M.~P{\"u}schel, B.~Seifert, and C.~Wendler, ``Discrete signal processing on
  meet/join lattices,'' {\em IEEE Transactions on Signal Processing}, vol.~69,
  pp.~3571--3584, 2021.

\bibitem{olfati2007consensus}
R.~Olfati-Saber, J.~A. Fax, and R.~M. Murray, ``Consensus and cooperation in
  networked multi-agent systems,'' {\em Proceedings of the IEEE}, vol.~95,
  no.~1, pp.~215--233, 2007.

\bibitem{kempe2003gossip}
D.~Kempe, A.~Dobra, and J.~Gehrke, ``{Gossip-based computation of aggregate
  information},'' 2003.

\bibitem{boyd2006randomized}
S.~Boyd, A.~Ghosh, B.~Prabhakar, and D.~Shah, ``Randomized gossip algorithms,''
  {\em IEEE transactions on information theory}, vol.~52, no.~6,
  pp.~2508--2530, 2006.

\bibitem{cortes2008distributed}
J.~Cort{\'e}s, ``Distributed algorithms for reaching consensus on general
  functions,'' {\em Automatica}, vol.~44, no.~3, pp.~726--737, 2008.

\bibitem{nejad2009max}
B.~M. Nejad, S.~A. Attia, and J.~Raisch, ``Max-consensus in a max-plus
  algebraic setting: The case of fixed communication topologies,'' in {\em 2009
  XXII International Symposium on Information, Communication and Automation
  Technologies}, pp.~1--7, IEEE, 2009.

\bibitem{tahbaz2006one}
A.~Tahbaz-Salehi and A.~Jadbabaie, ``A one-parameter family of distributed
  consensus algorithms with boundary: From shortest paths to mean hitting
  times,'' in {\em Proceedings of the 45th IEEE Conference on Decision and
  Control}, pp.~4664--4669, IEEE, 2006.

\bibitem{zheng2021lattice}
X.~Zheng, {\em The lattice agreement problem in distributed systems}.
\newblock PhD thesis, 2021.

\bibitem{goguen1992sheaf}
J.~A. Goguen, ``Sheaf semantics for concurrent interacting objects,'' {\em
  Mathematical Structures in Computer Science}, vol.~2, no.~2, pp.~159--191,
  1992.

\bibitem{ghrist2013topological}
R.~Ghrist and S.~Krishnan, ``A topological max-flow-min-cut theorem,'' in {\em
  2013 IEEE Global Conference on Signal and Information Processing},
  pp.~815--818, IEEE, 2013.

\bibitem{moy2020path}
M.~Moy, R.~Cardona, R.~Green, J.~Cleveland, A.~Hylton, and R.~Short, ``Path
  optimization sheaves,'' {\em arXiv preprint arXiv:2012.05974}, 2020.

\bibitem{hansen2021opinion}
J.~Hansen and R.~Ghrist, ``Opinion dynamics on discourse sheaves,'' {\em SIAM
  Journal on Applied Mathematics}, vol.~81, no.~5, pp.~2033--2060, 2021.

\bibitem{robinson2017sheaves}
M.~Robinson, ``Sheaves are the canonical data structure for sensor
  integration,'' {\em Information Fusion}, vol.~36, pp.~208--224, 2017.

\bibitem{koller2009probabilistic}
D.~Koller and N.~Friedman, {\em Probabilistic graphical models: principles and
  techniques}.
\newblock MIT press, 2009.

\bibitem{baier2008principles}
C.~Baier and J.-P. Katoen, {\em Principles of model checking}.
\newblock MIT press, 2008.

\bibitem{kantaros2019temporal}
Y.~Kantaros, M.~Guo, and M.~M. Zavlanos, ``Temporal logic task planning and
  intermittent connectivity control of mobile robot networks,'' {\em IEEE
  Transactions on Automatic Control}, vol.~64, no.~10, pp.~4105--4120, 2019.

\bibitem{rodionova2021stl}
A.~Rodionova, L.~Lindemann, M.~Morari, and G.~Pappas, ``Time-robust control for
  stl specifications,'' pp.~572--579, Institute of Electrical and Electronics
  Engineers,, 2021.

\bibitem{riess2021temporal}
H.~Riess, Y.~Kantaros, G.~Pappas, and R.~Ghrist, ``A temporal logic-based
  hierarchical network connectivity controller,'' in {\em 2021 Proceedings of
  the Conference on Control and its Applications}, pp.~17--24, SIAM, 2021.

\bibitem{kantaros2020reactive}
Y.~Kantaros, M.~Malencia, V.~Kumar, and G.~J. Pappas, ``Reactive temporal logic
  planning for multiple robots in unknown environments,'' in {\em 2020 IEEE
  International Conference on Robotics and Automation (ICRA)},
  pp.~11479--11485, IEEE, 2020.

\bibitem{fagin2004reasoning}
R.~Fagin, J.~Y. Halpern, Y.~Moses, and M.~Vardi, {\em Reasoning about
  knowledge}.
\newblock MIT press, 2004.

\bibitem{ghrist2020cellular}
R.~Ghrist and H.~Riess, ``Cellular sheaves of lattices and the tarski
  laplacian,'' {\em Homology, Homotopy \& Applications}, 2022.

\bibitem{cuninghame2003equation}
R.~A. Cuninghame-Green and P.~Butkovic, ``The equation {$A \otimes x = B
  \otimes y$} over (max,+),'' {\em Theoretical Computer Science}, vol.~293,
  no.~1, pp.~3--12, 2003.

\bibitem{cuninghame1991minimax}
R.~Cuninghame-Green, ``Minimax algebra and applications,'' {\em Fuzzy Sets and
  Systems}, vol.~41, no.~3, pp.~251--267, 1991.

\bibitem{wille1982restructuring}
R.~Wille, ``Restructuring lattice theory: An approach based on hierarchies of
  concepts,'' in {\em Ordered Sets}, pp.~445--470, Springer, 1982.

\bibitem{davey2002introduction}
B.~A. Davey and H.~A. Priestley, {\em Introduction to lattices and order}.
\newblock Cambridge university press, 2002.

\bibitem{curello2019preference}
G.~Curello and L.~Sinander, ``The preference lattice,'' {\em arXiv preprint
  arXiv:1902.07260}, 2019.

\bibitem{shannon1953lattice}
C.~Shannon, ``The lattice theory of information,'' {\em Transactions of the IRE
  professional Group on Information Theory}, vol.~1, no.~1, pp.~105--107, 1953.

\bibitem{gilbert1954lattice}
E.~Gilbert, ``Lattice theoretic properties of frontal switching functions,''
  {\em Journal of Mathematics and Physics}, vol.~33, no.~1-4, pp.~57--67, 1954.

\bibitem{shepard1985cellular}
A.~D. Shepard, {\em A cellular description of the derived category of a
  stratified space}.
\newblock PhD thesis, Brown University, 1985.

\bibitem{curry2014sheaves}
J.~M. Curry, {\em Sheaves, cosheaves and applications}.
\newblock PhD thesis, University of Pennsylvania, 2014.

\bibitem{blyth2014residuation}
T.~S. Blyth and M.~F. Janowitz, {\em Residuation theory}.
\newblock Elsevier, 2014.

\bibitem{ore1944galois}
O.~Ore, ``Galois connexions,'' {\em Transactions of the American Mathematical
  Society}, vol.~55, no.~3, pp.~493--513, 1944.

\bibitem{mendelson2009introduction}
E.~Mendelson, {\em Introduction to mathematical logic}.
\newblock Chapman and Hall/CRC, 2009.

\end{thebibliography}

\end{document}